\renewcommand{\vec}[1]{\ensuremath{\mathbf{#1}}}
\newcommand{\alert}[1]{\textcolor{black}{#1}}
\newtheorem{theorem}{Theorem}
\newtheorem{lemma}{Lemma}
\begin{document}

\IEEEoverridecommandlockouts
\title{The Degrees-of-Freedom of Multi-way Device-to-Device Communications is Limited by 2}
\author{
\IEEEauthorblockN{Anas Chaaban$^{*}$, Henning Maier$^{\dagger}$, Aydin Sezgin$^{*}$}
\IEEEauthorblockA{$^{*}$Institute of Digital Communication Systems, Ruhr-Universit\"at Bochum (RUB), Germany\\
Email: {anas.chaaban,aydin.sezgin@rub.de}\\
$^{\dagger}$Lehrstuhl f\"ur Theoretische Informationstechnik, RWTH Aachen, Germany,\\
Email: {maier@ti.rwth-aachen.de}
}
\thanks{
The work of A. Chaaban and A. Sezgin is supported by the German Research Foundation, Deutsche Forschungsgemeinschaft (DFG), Germany, under grant SE 1697/5.

The work of H. Maier is supported by the DFG  under grant PACIA-Ma 1184/15-2 and by the UMIC Research Centre, RWTH Aachen University. 
}
}

\maketitle

% \footnotetext[1]{a}

\begin{abstract}
A 3-user device-to-device (D2D) communications scenario is studied where each user wants to send and receive a message from each other user. This scenario resembles a 3-way communication channel. The capacity of this channel is unknown in general. In this paper, a sum-capacity upper bound that characterizes the degrees-of-freedom of the channel is derived by using genie-aided arguments. It is further shown that the derived upper bound is achievable within a gap of 2 bits, thus leading to an approximate sum-capacity characterization for the 3-way channel. As a by-product, interesting analogies between multi-way communications and multi-way relay communications are concluded.
\end{abstract}
% \begin{IEEEkeywords}
% Y-Channel, 
% \end{IEEEkeywords}

\section{Introduction}
The increase in the demand on data-rates in communication networks accompanied by the spectrum shortage has motivated researchers to seek new methods to combat these challenges. Several solutions have been proposed to overcome these challenges. Among the most promising solutions proposed to alleviate the limitations of the existing communication networks is a communication mode known as Device-to-device (D2D) communication.

D2D communication is defined as direct communication between mobile nodes in close proximity without incorporating a base station in the process (see \cite{AsadiWangMancuso} for a survey about D2D communications). It was first introduced in \cite{LinHsu}. Realizing that D2D communication helps offloading some data traffic from the cellular network, standardization bodies \cite{3GPPTR22803} are considering it as a potential component of future communication systems. Due to this fact, the research focus on D2D communication has increased recently. For instance, the potential of interference alignment \cite{Jafar} for D2D networks has been studied in \cite{ElkotbyElsayedIsmail}. Furthermore, a new spectrum sharing mechanism for D2D communication \alert{that outperforms state-of-the-art spectrum sharing mechanisms} has been proposed in \cite{NaderializadehAvestimehr}. \alert{At each time, this mechanism (which can also be implemented in a distributed way) schedules users that can achieve a near-optimal performance by communicating simultaneously while treating interference as noise.}

A D2D communication scenario between two nodes can be modelled by a two-way channel (TWC), where each node can communicate with the other node simultaneously. The TWC was introduced by Shannon in \cite{Shannon_TWC} where capacity upper and lower bounds were derived, and the capacity of some classes of TWC was characterized. If more than two nodes want to establish D2D communications, the scenario can be modelled by a multi-way channel. However, the extension of the TWC to more than two users has not been studied so far. Hence, information-theoretic results on the fundamental limits of communications over a multi-way channel are not available to date. The emergence of D2D communications calls upon establishing these limits. This direction is pursued in this paper.

We study a three-way channel consisting of three full-duplex nodes that want to communicate pair-wise simultaneously with each other. We call the resulting network the 3-D2D channel. We consider a Gaussian channel, i.e., where all nodes are disturbed by a Gaussian noise. Furthermore, we consider reciprocal channels where the channel gain between two nodes is the same in both directions. For this network, we derive novel upper bounds on the achievable rates which characterize the degrees-of-freedom (DoF) of the channel. Interestingly, while the cut-set bounds characterize the DoF of the TWC \cite{Han}, they do not characterize the DoF of the 3-D2D network, which is in turn characterized by our new bound. Furthermore, we prove the achievability of this sum-capacity upper bound within a gap of 2 bits. It turns out that the simple opportunistic strategy of letting the two users sharing the strongest channel communicate while leaving the third user silent suffices to achieve the sum-capacity of the 3-D2D channel within a constant gap. 

As a by-product of this characterization, we obtain the following conclusion. Contrary to some networks where the DoF changes by increasing the number of users (such as the interference channel \cite{CadambeJafar_KUserIC}), the DoF of the multi-way channel does not change if we increase the number of users from two to three. Namely, the sum-capacity of the 3-D2D channel scales (as $\mathsf{SNR}$ increases) as twice the capacity of the strongest channel between two users (2 DoF). This behaviour is the same as that in the TWC \cite{Han}.

Furthermore, we discover the following analogies between multi-way communication (as in the 3-D2D channel) and multi-way relaying. It was observed in \cite{ChaabanSezginAvestimehr_YC_SC} that the DoF of the two-way relay channel does not change if we increase the number of users. As mentioned above, the same observation holds for multi-way communications when going from the 2-user to the 3-user case. Furthermore, it was observed in \cite{ChaabanSezginAvestimehr_YC_SC} that independent of the number of users, the optimal DoF can be achieved by letting the two strongest users communicate while leaving the remaining users silent. The same holds for the 3-D2D channel where the optimal DoF can be achieved by letting the two users sharing the strongest channel communicate. These analogies are interesting, and motivate the search for further analogies between the two types of networks.

The paper is organized as follows. The system model of the 3-D2D channel is described in Section \ref{Model}. The main result of the paper is given in Section \ref{MainResults}. Upper bounds on the capacity of the channel are given in Section \ref{UpperBounds} and transmission strategies are given in Section \ref{Achievability}. Finally, we conclude the paper with Section \ref{Conclusion}. Throughout the paper, we use $x^{n}$ to denote the length-$n$ sequence $(x(1),\cdots,x(n))$ and \alert{we use capital letters to denote random variables.} The function $C(x)$ is used to denote $\frac{1}{2}\log(1+x)$ for $x\geq0$, .

\section{System Model}
\label{Model}
The 3-D2D channel is a multi-way channel consisting of three users communicating simultaneously with each other. The channel is fully connected as shown in Fig. \ref{DChannelModel} and all nodes are full-duplex. Each user in the 3-D2D channel has two independent messages, one for each remaining user. That is, user 1 has messages $m_{12}$ and $m_{13}$ intended to users 2 and 3, respectively. Similarly user 2 has $m_{21}$ and $m_{23}$, and user~3 has $m_{31}$ and $m_{32}$. The message $m_{jk}$ is chosen uniformly from a message set $\mathcal{M}_{jk}=\{1,\cdots,2^{nR_{jk}}\}$, where $R_{jk}$ is the rate of the message and $n$ is the code length.

\begin{figure}[t]
\centering
\includegraphics[width=\columnwidth]{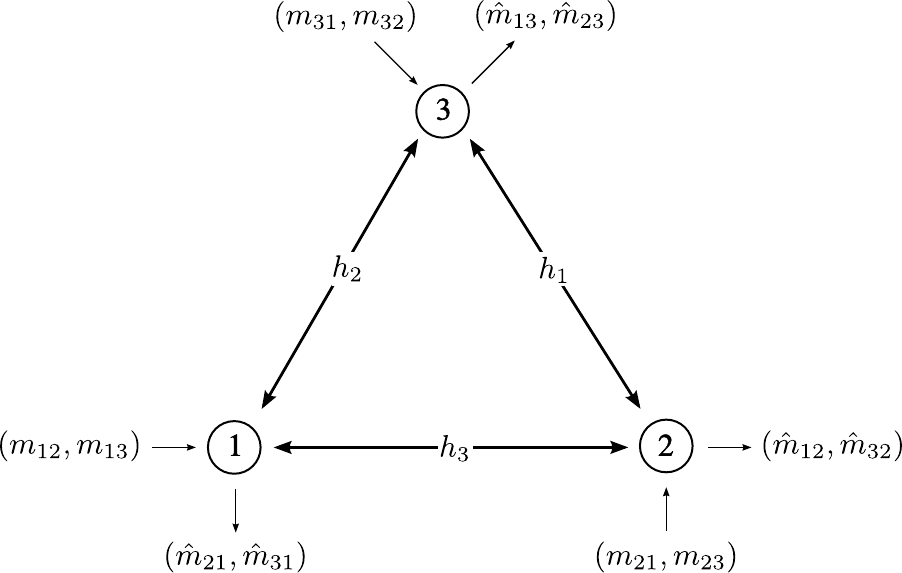}
\caption{Multi-way communication over the reciprocal 3-D2D channel. Each user sends a message to and receives a message from the two other users.}
\label{DChannelModel}
\end{figure}

To send his messages, user 1 sends a transmit signal $x_1^{n}$ of length $n$ symbols, whose $i$th symbol $x_1(i)\in\mathbb{R}$ is constructed from the messages $m_{12}$ and $m_{13}$ and the received symbols at user 1 up to time instant $i$, i.e., $y_1^{i-1}$, using an encoding function $\mathcal{E}_{1,i}$. Users 2 and 3 construct their signals similarly. Thus, we may write for each user $j$
\begin{align}
\label{Encoding}
x_j(i)=\mathcal{E}_{j,i}(m_{jk},m_{j\ell},y_j^{i-1}),
\end{align}
for distinct $j,k,\ell\in\{1,2,3\}$. The received signals are given by
\begin{align}
y_1(i)=h_3x_2(i)+h_2x_3(i)+z_1(i),\\
y_2(i)=h_3x_1(i)+h_1x_3(i)+z_2(i),\\
y_3(i)=h_2x_1(i)+h_1x_2(i)+z_3(i),
\end{align}
where $h_3,h_2,h_1\in\mathbb{R}$ are the (globally known) real-valued (static) channel coefficients, and $z_1,z_2,z_3\in\mathbb{R}$ represent the independent noises at users 1, 2, and 3, respectively, which are Gaussian with unit variance and i.i.d. over time. Note that the channels are assumed to be reciprocal, i.e., the channel gain between two users is the same in both directions. We assume without loss of generality that 
\begin{align}
\label{Ordering}
h_3^2\geq h_2^2\geq h_1^2,
\end{align} 
i.e., users 1 and 2 share the strongest channel. Each user has a power constraint $P$, i.e.,
\begin{align}
\sum_{i=1}^{n}\mathbb{E}[X_j(i)^{2}]\leq nP,\quad j\in\{1,2,3\}.
\end{align}

After receiving $y_j^{n}$, user $j$ decodes his desired messages $\hat{m}_{kj}$ and $\hat{m}_{\ell j}$ ($\{j,k,\ell\}=\{1,2,3\}$) using a decoding function $\mathcal{D}_j$ and his own messages $m_{jk}$ and $m_{j\ell}$, i.e., 
\begin{align}
\label{Decoder}
(\hat{m}_{kj},\hat{m}_{\ell j})=\mathcal{D}_j(y_j^{n},m_{jk},m_{j\ell}).
\end{align} 
An error occurs if $m_{jk}\neq \hat{m}_{jk}$ for some $j\neq k$. The collection of message sets, encoders, and decoders defines a code for the 3-D2D channel denoted $(n,\vec{R})$ where $$\vec{R}=(R_{12},R_{13},R_{21},R_{23},R_{31},R_{32}),$$
and induces an error probability $P_{e,n}$.

A rate tuple $\vec{R}$ is said to be achievable if there exist a sequence of $(n,\vec{R})$ codes such that the $P_{e,n}$ can be made arbitrarily small by increasing $n$. In this paper, we are interested in the sum-capacity $C_\Sigma$ of the 3-D2D channel defined as the maximum achievable sum-rate $R_\Sigma=\sum_j\sum_{k\neq j}R_{jk}$. The main result of the paper is given in the following section.

\section{Main Result}
\label{MainResults}
The main result regarding the sum-capacity of the considered 3-D2D channel is presented in the following theorem.
\begin{theorem}
\label{Thm:Main}
The sum-capacity of the 3-D2D channel is bounded by
\begin{align}
\label{ApproxCap}
2C(h_3^{2}P)\leq C_\Sigma \leq 2C(h_3^{2}P)+2.
\end{align}
\end{theorem}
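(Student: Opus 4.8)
The plan is to establish the two inequalities in \eqref{ApproxCap} separately; the right (converse) bound is the substantive part.

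\emph{Lower bound.} This is immediate. Silence user~3, i.e.\ set $X_3\equiv 0$ (equivalently $m_{31}=m_{32}=0$). The network then degenerates to the Gaussian two-way channel between users~1 and~2 with gain $h_3$ in both directions, and in this model the two directions do not interfere: $Y_1=h_3X_2+Z_1$ depends only on $X_2$ and $Y_2=h_3X_1+Z_2$ only on $X_1$. Hence independent Gaussian codebooks of power $P$ achieve $R_{12}=R_{21}=C(h_3^2P)$ simultaneously (no coding across the two directions is needed), so $R_\Sigma=2C(h_3^2P)$ is achievable and $C_\Sigma\ge 2C(h_3^2P)$; alternatively one invokes the capacity region of the Gaussian TWC \cite{Shannon_TWC,Han} directly.

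\emph{Upper bound.} Here I would use a genie-aided converse. Applying Fano's inequality to the three decoders bounds the six rates by mutual-information terms of the form $I(\,\cdot\,;Y_j^n\mid\text{user }j\text{'s own messages})+n\epsilon_n$. The difficulty is that the plain (cut-set-type) combination of these terms only yields $3C(h_3^2P)+O(1)$: three single-antenna receivers contribute three ``effective dimensions'', whereas the theorem asserts only two. To remove one dimension I would hand genie signals to two of the users — presumably the two users sharing the strongest link, or one of them together with user~3 — chosen so that (a) each of those two users can, aided by its genie, reliably decode three of the six messages; (b) after the genie is supplied, the corresponding mutual information is still controlled by a \emph{single} received signal's worth of differential entropy; and (c) the extra entropy contributed by the genies is either vanishing or cancels when the two resulting bounds are summed. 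I anticipate the two bounds to read, up to relabeling, $n(R_{12}+R_{13}+R_{32})\le nC(h_3^2P)+n+n\epsilon_n$ and $n(R_{21}+R_{31}+R_{23})\le nC(h_3^2P)+n+n\epsilon_n$ — each group being the two messages decoded by one of the strong users together with one message decoded by user~3 — whose sum, after $n\to\infty$, is exactly $C_\Sigma\le 2C(h_3^2P)+2$.

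Each individual bound would then be closed by a per-symbol expansion in the style of Shannon's two-way-channel outer bound \cite{Shannon_TWC}: at time $i$, conditioning on the relevant past received symbols of user~$j$ makes the adaptively generated input $X_j(i)$ a deterministic function of the conditioning, so the feedback dependence collapses into an entropy term $\tfrac12\log\!\bigl(2\pi e(\mathrm{Var}+1)\bigr)$. The effective received-signal variance is at most $2h_3^2P+2h_\ell^2P+1\le 4h_3^2P+1\le 4(h_3^2P+1)$ for some $\ell\in\{1,2\}$, using the ordering \eqref{Ordering}, the power constraint, and concavity of the logarithm; this is precisely what produces the $C(h_3^2P)+1$ in each bound.

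The hard part is the choice of genies, i.e.\ steps (a)--(c). A genie that simply cancels the interference at a receiver, so that it sees a clean strong copy of some input $X_a$, tends to leak the messages carried in that input, which would make the corresponding rate bound trivial and useless; a genie made only of noise does not help decoding at all. The genies must therefore be calibrated so that the terms they contribute positively to one of the two bounds are matched by terms appearing in the other bound, so that the two bounds add up cleanly — and this has to be reconciled both with the bookkeeping of the feedback-induced statistical coupling among $X_1^n,X_2^n,X_3^n$ and with the use of the reciprocity assumption. This balancing act is where I expect the main effort to go.
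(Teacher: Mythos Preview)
Your lower bound is exactly the paper's. Your high-level plan for the upper bound --- two genie-aided bounds on complementary triples of rates, summed --- is also the paper's. But the substantive step is the choice of genie, and you have not found it; moreover, your guess about what kind of genie would work is off. You expect the genie's contribution to cancel between the two bounds. In the paper there is no cancellation: each bound stands alone, and the genie's cost is a constant (at most $\tfrac12$ bit) independent of $P$ in each bound separately.

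The genie that works has a structure you did not anticipate. To the user~$j$ that already decodes $(m_{kj},m_{\ell j})$, the paper hands (i) one extra message, namely $m_{k\ell}$, so that user~$j$ now knows \emph{both} of user~$k$'s transmitted messages, and (ii) a noise-correction sequence $\tilde Z^n$ that is a fixed linear combination of the unit-variance Gaussians $Z_j^n,Z_k^n$. With (i), user~$j$ can compute $X_k(1)$; subtracting its contribution from $Y_j(1)$ and rescaling by the appropriate ratio of channel gains (this is where reciprocity is used) produces $Y_k(1)$ up to a noise mismatch, which (ii) repairs exactly. User~$j$ therefore recovers $Y_k(1)$, hence $X_k(2)$, and by induction all of $Y_k^n$, so it can decode $m_{\ell k}$ as well. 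The message $m_{k\ell}$ given away costs nothing because the bound is on the other three rates; the noise correction $\tilde Z(i)$ has variance at most~$2$, so it contributes at most $h(\tilde Z(i))-h(Z(i))\le\tfrac12$. Concretely the paper gives this genie to user~1 (yielding $R_{21}+R_{31}+R_{32}\le C(h_3^2P+h_2^2P)+C(h_1^2/h_2^2)$) and to a noise-enhanced user~3 (yielding $R_{12}+R_{13}+R_{23}\le C(h_3^2P+h_3^2h_1^2P/h_2^2)+\tfrac12$); by \eqref{Ordering} each right side is at most $C(2h_3^2P)+\tfrac12\le C(h_3^2P)+1$, and summing gives the claim. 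Your proposed partition $\{R_{12},R_{32},R_{13}\}\cup\{R_{21},R_{31},R_{23}\}$ is different from the paper's, but the same device (genie at user~2 reconstructing $Y_3^n$, genie at user~1 reconstructing $Y_3^n$) would handle it --- arguably more symmetrically, since no receiver enhancement is then needed.
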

This theorem provides an approximate characterization of the sum-capacity of the given network within a gap of 2 bits. The proof of the theorem is given in Sections \ref{UpperBounds} where a novel genie-aided upper bound is derived leading to the right-hand side of \eqref{ApproxCap}, and in Section \ref{Achievability} where the achievability of the upper bound within a gap of 2 bits is shown leading to the left-hand side of \eqref{ApproxCap}.

The following interesting conclusions can be drawn from this theorem. First, the sum-capacity of the 3-D2D channel has the same scaling behaviour as that of the two-way channel~\cite{Han}. That is, both the 2-user case and the 3-user case have 2 DoF, where the DoF is defined as
\begin{align}
\text{DoF}=\lim_{\mathsf{SNR}\to\infty}\frac{C_\Sigma(\mathsf{SNR})}{\frac{1}{2}\log(\mathsf{SNR})}.
\end{align}
Thus, in a multi-way channel, the DoF stays constant at 2 if we increase the number of users from 2 to 3 (contrary to some channels such as the interference channel \cite{CadambeJafar_KUserIC} where the DoF depends on the number of users). Furthermore, the sum-capacity can be approached within a constant gap by letting the two users sharing the strongest channel (users 1 and 2) communicate while keeping the remaining user silent. That is, by letting users 1 and 2 communicate while keeping user 3 silent, we can achieve a sum-rate which is within 2 bits (at most) of the sum capacity. Interestingly, the same observations were concluded for multi-way relay communications in \cite{ChaabanSezginAvestimehr_YC_SC} where it was concluded that extending the two-way relay channel \cite{WilsonNarayananPfisterSprintson, AvestimehrSezginTse} to a three-way relay channel (Y-channel \cite{LeeLimChun,ChaabanSezgin_ISIT12_Y,ChaabanOchsSezgin}) preserves the same DoF, and that this scaling can be achieved by letting the two strongest users communicate while keeping the third user silent.

Next, we present new upper bounds for the 3-D2D channel which are necessary for proving Theorem \ref{Thm:Main}.

\section{Upper bounds}
\label{UpperBounds}
The cut-set bounds can be used to obtain an upper bound on the achievable rates in a multi-way channel. In fact, the cut-set bounds are tight for the 2-user case (the two-way channel) as shown in \cite{Han}. However, this is not the case for the 3-user case, i.e., the 3-D2D channel. For instance, in the 3-D2D channel, the rates of the messages from and to user 1 can be bounded by the cut-set bounds as
\begin{align}
R_{12}+R_{13}&\leq I(X_1;Y_2,Y_3|X_2,X_3)\\
R_{21}+R_{31}&\leq I(X_2,X_3;Y_1|X_1),
\end{align}
for some input distribution $p(x_1,x_2,x_3)$, where $X_j$ and $Y_j$ denote the input and output random variables. Similar bounds can be obtained for the other 2 users. By maximizing these bounds using the Gaussian input distribution, we can write
\begin{align}
R_{12}+R_{13}&\leq C(h_3^{2}P+h_2^{2}P)\\
R_{21}+R_{31}&\leq C(h_3^{2}P+h_2^{2}P).
\end{align}
Note that these bounds scale as $\frac{1}{2}\log(P)$ as $P$ grows. Similarly, we can bound the rates $R_{21}+R_{23}$,  $R_{12}+R_{32}$,  $R_{31}+R_{32}$, and $R_{13}+R_{23}$ with quantities that have the same scaling behaviour, i.e., they scale as $\frac{1}{2}\log(P)$. Note that this leads to a DoF of 3. It turns out however that the cut-set bounds are not tight for this network. In what follows, we provide bounds on the achievable rates that lead to a tighter sum-capacity upper bound.

\begin{lemma}
\label{Lemma:FirstBound}
An achievable rate for the 3-D2D channel must satisfy
\begin{align}
R_{21}+R_{31}+R_{32} &\leq C(h_3^2P+h_2^2P)+C\left(\frac{h_1^{2}}{h_2^{2}}\right).
\end{align}
\end{lemma}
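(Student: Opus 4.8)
The plan is to use a genie-aided argument in which a genie provides side information to user~1 so that, after using Fano's inequality, we can bound the sum $R_{21}+R_{31}+R_{32}$ in terms of differential entropies of the received signals. The key observation is that the three messages $m_{21}$, $m_{31}$, and $m_{32}$ together are decoded across users~1 and~3: user~1 decodes $m_{21}$ and $m_{31}$ from $y_1^n$ (knowing its own messages), while user~3 decodes $m_{32}$ from $y_3^n$ (knowing its own messages). First I would start from $n(R_{21}+R_{31}+R_{32}) = H(m_{21},m_{31},m_{32})$ and split it as $H(m_{21},m_{31}) + H(m_{32}\mid m_{21},m_{31})$, then apply Fano at user~1 for the first term (conditioning on $m_{12},m_{13}$, which are independent of the rest) and at user~3 for the second term (conditioning on $m_{31},m_{32}$ — wait, on $m_{3\cdot}$, i.e. on $m_{31},m_{32}$; more carefully, user~3 knows $m_{31},m_{32}$ and decodes $m_{13},m_{23}$, so I should instead route $m_{32}$ through the receiver that wants it, which is user~2). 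Let me restate: $m_{21}$ is wanted by user~1, $m_{31}$ by user~1, $m_{32}$ by user~2. So the natural split is to give all three to a genie-aided version of users~1 and~2.

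Concretely, I would bound $n(R_{21}+R_{31}+R_{32})$ by $I(m_{21},m_{31};y_1^n\mid m_{12},m_{13}) + I(m_{32};y_2^n\mid m_{21},m_{23},\text{genie}) + n\epsilon_n$, and then hand user~2 a genie consisting of $x_3^n$ (or the signal $h_1 x_3^n + z_2^n$ appropriately scaled). The crucial step is the choice of genie: giving user~2 enough information about user~3's transmission so that $y_2^n$ becomes degraded-equivalent to $y_1^n$ up to a noise term, which is what produces the extra $C(h_1^2/h_2^2)$ penalty term rather than a full $\frac12\log P$ term. Because the channel is reciprocal, $y_1$ sees $x_2$ through $h_3$ and $x_3$ through $h_2$, while $y_2$ sees $x_1$ through $h_3$ and $x_3$ through $h_1$; the mismatch between the $h_2$ and $h_1$ gains on the $x_3$ link is exactly the source of the $h_1^2/h_2^2$ factor. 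I would then combine the two mutual-information terms, use the fact that conditioning reduces entropy and that Gaussian inputs maximize the relevant differential entropies (together with the power constraint, so that $\mathbb{E}[X_j(i)^2]\le P$), and identify the sum-rate bound with $C(h_3^2 P + h_2^2 P) + C(h_1^2/h_2^2)$ after a single-letterization step.

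The main obstacle I anticipate is twofold. First, getting the genie exactly right: it must be strong enough that the resulting bound single-letterizes cleanly to a Gaussian expression, but weak enough that the penalty stays bounded (a constant in $P$ in the DoF sense — here $C(h_1^2/h_2^2)$ is a constant since $h_1^2\le h_2^2$). This typically requires giving user~2 a noisy version of $x_3^n$ with just the right noise level so that, combined with user~2's own observation, one can reconstruct a statistical equivalent of $y_1^n$. Second, handling the feedback/adaptive nature of the encoders in \eqref{Encoding}: since $x_j(i)$ depends on past received symbols, the inputs are correlated and one cannot naively assume independence; I would deal with this by keeping everything in terms of entropies of received sequences, using the chain rule and the Markov structure induced by the genie, and invoking the maximum-entropy bound only at the final single-letter step on per-symbol differential entropies subject to the second-moment constraint. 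The reciprocity assumption \eqref{Ordering} and $h_3^2\ge h_2^2\ge h_1^2$ will be used to ensure the penalty term is a genuine constant and that the Gaussian-optimal covariance respects the ordering.
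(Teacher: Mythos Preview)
Your instinct that the $C(h_1^2/h_2^2)$ term should arise from the mismatch between the $h_2$- and $h_1$-gains on the link to user~3 is exactly right, but the decomposition you propose does not produce it. If you split as
\[
I(M_{21},M_{31};Y_1^n\mid M_{12},M_{13})+I(M_{32};Y_2^n,\text{genie}\mid M_{21},M_{23}),
\]
the first term already absorbs the full $C(h_3^2P+h_2^2P)$, so the second term must be bounded by the constant $C(h_1^2/h_2^2)$. But $Y_2$ contains $h_1X_3$, and $X_3$ carries $m_{32}$; hence $I(M_{32};Y_2^n\mid M_{21},M_{23})$ scales like $\tfrac12\log P$, not like a constant. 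Handing user~2 a noisy copy of $X_3^n$ does not help: either the copy is good enough to reveal $m_{32}$ (then conditioning on it is no longer a valid Fano step for $R_{32}$), or it is noisy enough that $Y_2$ still carries $\Theta(\log P)$ bits about $m_{32}$. Your remark that the genie should let one ``reconstruct a statistical equivalent of $y_1^n$'' at user~2 cannot work either, since $y_1$ depends on $x_2$ while $y_2$ depends on $x_1$; there is no scaling of $x_3$-side information that bridges that gap.

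The paper does the opposite: it concentrates all three messages at user~1 and gives user~1 the genie $(m_{23},\tilde z_2^n)$ with $\tilde z_2^n=z_2^n-\tfrac{h_1}{h_2}z_1^n$. The crucial mechanism you are missing is an \emph{iterative reconstruction} of $y_2^n$ at user~1: knowing $m_{21},m_{23}$ user~1 forms $x_2(1)$, subtracts $h_3x_2(1)$ from $y_1(1)$, rescales by $h_1/h_2$, adds $h_3x_1(1)$ and then $\tilde z_2(1)$ to obtain $y_2(1)$ exactly; this feeds the encoder $\mathcal{E}_{2,2}$ to produce $x_2(2)$, and so on. This is precisely what tames the adaptive encoders in \eqref{Encoding} that you flagged as the second obstacle. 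After Fano at the single (genie-aided) user~1 one gets
\[
I(M_{21},M_{31},M_{32};Y_1^n,\tilde Z_2^n\mid M_{12},M_{13},M_{23}),
\]
and the chain rule splits this into a $Y_1$-part giving $C(h_3^2P+h_2^2P)$ and a $\tilde Z_2$-part giving $h(\tilde Z_2)-h(Z_2)=C(h_1^2/h_2^2)$. The constant penalty is the differential entropy of the \emph{noise} genie, not of a signal genie; that is the missing idea in your plan.
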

\begin{proof}
The intuition for finding this bound is as follows. User 1 can decode $m_{21}$ and $m_{31}$ from $y_1^{n}$, $m_{12}$, and $m_{13}$ \eqref{Decoder}. We would like to enable user 1 to decode one more message (here $m_{32}$) by giving it the least possible side information. To guarantee this, we need to enable user 1 to construct $y_2^n$. Now note that if we give $m_{23}$ to user 1 as side information, then after decoding $m_{21}$, user 1 has all the information available to user 2 at time instant $i=1$, i.e., the message pair $(m_{21},m_{23})$. Thus, user 1 can generate the first symbol of $x_2^{n}$, i.e., $x_2(1)$ (cf. \eqref{Encoding}). Now by using $y_1(1)=h_3x_2(1)+h_2x_3(1)+z_1(1)$, user 1 can obtain $\tilde{y}_1(1)=h_2x_3(1)+z_1(1)$. After multiplying $\tilde{y}_1(1)$ by $\frac{h_1}{h_2}$, and  adding $h_3x_1(1)$, user 1 obtains 
\begin{align}
\tilde{y}_2(1)=h_3x_1(1)+h_1x_3(1)+\frac{h_1}{h_2}z_1(1).
\end{align}
This is a less noisy version of $y_2(1)$. Note that in order to repeat this procedure for the following time instances ($i>1$), it is not enough to have $\tilde{y}_2(1)$. Rather, we need $y_2(1)$ exactly in order to produce $x_2(2)$ (which is generated from $m_{21}$, $m_{23}$, and $y_2(1)$ as in \eqref{Encoding}) which is essential for obtaining $y_2(2)$. In order to obtain $y_2(1)$ at user 1, we give him the signal $\tilde{z}_2^n=z_2^n-\frac{h_1}{h_2}z_1^n$. Now by adding $\tilde{y}_2(1)$ to $\tilde{z}_2(1)$, user 1 gets $y_2(1)$ and can generate $x_2(2)$. By repeating this procedure, user 1 can generate all instances of $y_2^n$. Then using the decoded $m_{21}$, the provided side information $m_{23}$, and the generated $y_2^{n}$, user 1 can decode $m_{32}$ just as user 2 can decode it. Therefore, we can write 
\begin{align*}
n(R_{21}+R_{31}+R_{32}-\varepsilon_n)\leq I(\hat{\vec{M}}_{1},M_{32};Y_1^n,\tilde{Z}_2^n, \vec{M}_1,M_{23})
\end{align*}
by using Fano's inequality, where $\varepsilon_n\to0$ as $n\to\infty$, and where we used $\vec{M}_{1}$ and $\hat{\vec{M}}_{1}$ to denote the random vectors $(M_{12},M_{13})$ and $(M_{21},M_{31})$ indicating the message pairs sent and received at user 1, respectively. This bound can be written as
\begin{align}
& n(R_{21}+R_{31}+R_{32}-\varepsilon_n)\nonumber\\
& \stackrel{(a)}{\leq} I(\hat{\vec{M}}_{1},M_{32};Y_1^n,\tilde{Z}_2^n| \vec{M}_{1},M_{23})\\
& \stackrel{(b)}{=} \sum_{i=1}^{n} I(\hat{\vec{M}}_{1},M_{32};Y_1(i),\tilde{Z}_2(i)|\vec{M}_{1},M_{23},Y_1^{i-1},\tilde{Z}_2^{i-1})\nonumber\\
& \stackrel{(b)}{=} \sum_{i=1}^{n} I(\hat{\vec{M}}_{1},M_{32};Y_1(i)|\vec{M}_{1},M_{23},Y_1^{i-1},\tilde{Z}_2^{i-1})\nonumber\\
\label{TIE}
&\quad +\sum_{i=1}^{n} I(\hat{\vec{M}}_{1},M_{32};\tilde{Z}_2(i)| \vec{M}_{1},M_{23},Y_1^{i},\tilde{Z}_2^{i-1}),
\end{align}
where $(a)$ follows from the independence of the messages, and $(b)$ follows by using the chain rule. The first mutual information expression in \eqref{TIE} can be bounded as
\begin{align}
&I(\hat{\vec{M}}_{1},M_{32};Y_1(i)|\vec{M}_{1},M_{23},Y_1^{i-1},\tilde{Z}_2^{i-1})\nonumber\\
&\stackrel{(c)}{=}
h(Y_1(i)|\vec{M}_{1},M_{23},Y_1^{i-1},\tilde{Z}_2^{i-1})-h(Y_1(i)|\vec{M},Y_1^{i-1},\tilde{Z}_2^{i-1})\nonumber\\
&\stackrel{(d)}{\leq} h(Y_1(i))-h(Y_1(i)|\vec{M},Y_1^{i-1},\tilde{Z}_2^{i-1},Z_2^{i-1},Z_3^{i-1})\\
&\stackrel{(e)}{=} h(Y_1(i))-h(Z_1(i)|\vec{M},Z_1^{i-1},\tilde{Z}_2^{i-1},Z_2^{i-1},Z_3^{i-1})\\
&= h(Y_1(i))-h(Z_1(i)|\vec{M},Z_1^{i-1},Z_2^{i-1},Z_3^{i-1})\\
\label{IE1}
&\stackrel{(f)}{=} h(Y_1(i))-h(Z_1(i)),
\end{align}
where $(c)$ follows by using the definition of mutual information and by defining $\vec{M}=(M_{12},M_{13},M_{21},M_{23},M_{31},M_{32})$, $(d)$ follows 
since conditioning does not increase entropy, $(e)$ follows since by knowing $\vec{M}$, $Z_2^{i-1}$, $Z_3^{i-1}$, and $Y_1^{i-1}$, all random variables required to determine $X_2^{i}$, and $X_3^{i}$ are given, and $(f)$ follows from the independence of the messages and the noise random variables. The second mutual information expression in \eqref{TIE} can be bounded as
\begin{align}
&I(\hat{\vec{M}}_{1},M_{32};\tilde{Z}_2(i)| \vec{M}_{1},M_{23},Y_1^{i},\tilde{Z}_2^{i-1})\nonumber\\
&= h(\tilde{Z}_2(i)|\vec{M}_{1},M_{23},Y_1^{i},\tilde{Z}_2^{i-1})-h(\tilde{Z}_2(i)| \vec{M},Y_1^{i},\tilde{Z}_2^{i-1})\nonumber\\
&\leq h(\tilde{Z}_2(i))-h(\tilde{Z}_2(i)| \vec{M},Y_1^{i},\tilde{Z}_2^{i-1},Z_2^{i-1},Z_3^{i-1})\\
&= h(\tilde{Z}_2(i))-h(\tilde{Z}_2(i)| \vec{M},Z_1^{i},\tilde{Z}_2^{i-1},Z_2^{i-1},Z_3^{i-1})\\
&= h(\tilde{Z}_2(i))-h(\tilde{Z}_2(i)| \vec{M},Z_1^{i},Z_2^{i-1},Z_3^{i-1})\\
&= h(\tilde{Z}_2(i))-h(Z_2(i)| \vec{M},Z_1^{i},Z_2^{i-1},Z_3^{i-1})\\
\label{IE2}
&= h(\tilde{Z}_2(i))-h(Z_2(i)),
\end{align}
which can be shown by using similar arguments as $(c)$-$(f)$ above. By substituting  \eqref{IE1} and \eqref{IE2} in \eqref{TIE} we obtain
\begin{align}
& n(R_{21}+R_{31}+R_{32}-\varepsilon_n)\nonumber\\
& \leq \sum_{i=1}^{n} h(Y_1(i))-h(Z_1(i))+h(\tilde{Z}_2(i))-h(Z_2(i))\\
& \leq \frac{n}{2}\log(1+h_3^{2}P+h_2^{2}P)+\frac{n}{2}\log\left(1+\frac{h_1^{2}}{h_2^{2}}\right),
\end{align}
which follows since the Gaussian distribution is a differential entropy maximizer. Now by dividing by $n$ and then letting $n\to\infty$, we obtain
\begin{align}
R_{21}+R_{31}+R_{32}&\leq C(h_3^{2}P+h_2^{2}P)+C\left(\frac{h_1^{2}}{h_2^{2}}\right),
\end{align}
which proves the statement of the lemma.
\end{proof}

Now, we have a bound on the sum $R_{21}+R_{31}+R_{32}$ which scales as $\frac{1}{2}\log(P)$ as $P$ grows. Next, we provide a bound on $R_{12}+R_{13}+R_{23}$ which complements the previous bound to a sum-capacity upper bound.

\begin{lemma}
\label{Lemma:SecondBound}
An achievable rate for the 3-D2D channel must satisfy
\begin{align}
R_{12}+R_{23}+R_{13} \leq C\left(h_3^2P+h_3^2\frac{h_1^2}{h_2^2}P\right)+\frac{1}{2}.
\end{align}
\end{lemma}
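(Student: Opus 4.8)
The plan is to reuse the genie‑aided scheme of Lemma~\ref{Lemma:FirstBound}, but now with user~2 as the enhanced receiver and with a \emph{scaled} surrogate of $y_3^n$ supplied to it in place of a noise‑swapped copy of $y_2^n$. User~2 already recovers its desired messages $m_{12},m_{32}$ from $(y_2^n,m_{21},m_{23})$; I want it to additionally recover $m_{13}$ and $m_{23}$, so that Fano's inequality at user~2 will bound $R_{12}+R_{13}+R_{23}$. Both extra messages are decoded by user~3 from $(y_3^n,m_{31},m_{32})$, so it suffices to enable user~2 to reconstruct $y_3^n$ and then emulate $\mathcal{D}_3$. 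There are two reasons one cannot simply copy Lemma~\ref{Lemma:FirstBound}. First, $m_{23}$ is user~2's \emph{own} message: were the reconstruction of $y_3^n$ to go through $x_2(i)=\mathcal{E}_{2,i}(m_{21},m_{23},y_2^{i-1})$, then $m_{23}$ would have to be conditioned on and $R_{23}$ would leave the bound. Second, the mirror‑image construction (user~3 rebuilding $y_2^n$ from $y_3^n$) needs the noise upgrade $z_2^n-\tfrac{h_3}{h_2}z_3^n$, whose variance $1+h_3^2/h_2^2$ is not bounded by a constant in view of \eqref{Ordering}, so it would not yield a constant‑gap bound.

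To get around both issues, define
\begin{align}
\hat{y}_2(i)&=\tfrac{h_3}{h_2}y_3(i)+z_2(i)-\tfrac{h_3}{h_2}z_3(i)=h_3x_1(i)+\tfrac{h_1h_3}{h_2}x_2(i)+z_2(i),\\
\tilde{z}_3(i)&=z_3(i)-\tfrac{h_2}{h_3}z_2(i),
\end{align}
which satisfy the deterministic identity $y_3(i)=\tfrac{h_2}{h_3}\hat{y}_2(i)+\tilde{z}_3(i)$. The genie discloses to user~2 the sequences $\hat{y}_2^n$ and $\tilde{z}_3^n$ together with the messages $m_{31}$ and $m_{32}$. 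User~2 then: (i) forms $y_3^n$ from the above identity, using neither $x_2^n$ nor $m_{23}$; (ii) runs $\mathcal{D}_3(y_3^n,m_{31},m_{32})$ to obtain $\hat{m}_{13}$ and $\hat{m}_{23}$; (iii) with $\hat{m}_{23}$ now in hand, together with $m_{21},m_{31},m_{32}$ and the already available $y_3^n$ and $\hat{y}_2^n$, reconstructs $x_2(i)$ and $x_3(i)$ recursively in $i$ and hence $y_2(i)=\hat{y}_2(i)-\tfrac{h_1h_3}{h_2}x_2(i)+h_1x_3(i)$, after which $\mathcal{D}_2(y_2^n,m_{21},\hat{m}_{23})$ delivers $\hat{m}_{12}$. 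A union bound over the decoding stages shows the error probability still vanishes, so Fano's inequality and the independence of the messages give
\begin{align}
n(R_{12}+R_{13}+R_{23}-\varepsilon_n)\leq I(M_{12},M_{13},M_{23};\hat{Y}_2^n,\tilde{Z}_3^n\mid M_{21},M_{31},M_{32}).
\end{align}

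From here I would repeat, almost verbatim, the steps $(b)$--$(f)$ of the proof of Lemma~\ref{Lemma:FirstBound}: apply the chain rule over $i$, split each increment into a $\hat{Y}_2(i)$‑term and a $\tilde{Z}_3(i)$‑term, bound each positive entropy by discarding all conditioning, and bound each negative entropy from below by appending the past noises $(Z_1^{i-1},Z_2^{i-1},Z_3^{i-1})$ to the conditioning, which renders $X_1(i),X_2(i),X_3(i)$ deterministic and collapses the negative entropies to $h(Z_2(i))$ and $h(Z_3(i))$ respectively. This yields
\begin{align}
n(R_{12}+R_{13}+R_{23}-\varepsilon_n)\leq\sum_{i=1}^{n}\left(h(\hat{Y}_2(i))-h(Z_2(i))\right)+\sum_{i=1}^{n}\left(h(\tilde{Z}_3(i))-h(Z_3(i))\right).
\end{align}
Since $\hat{Y}_2(i)=h_3X_1(i)+\tfrac{h_1h_3}{h_2}X_2(i)+Z_2(i)$, the fact that the Gaussian maximizes differential entropy, together with Jensen's inequality and the power constraints, gives $\tfrac1n\sum_i(h(\hat{Y}_2(i))-h(Z_2(i)))\leq C(h_3^2P+h_3^2\tfrac{h_1^2}{h_2^2}P)$; and since $\tilde{Z}_3(i)$ is Gaussian of variance $1+h_2^2/h_3^2\leq2$ by \eqref{Ordering}, $\tfrac1n\sum_i(h(\tilde{Z}_3(i))-h(Z_3(i)))\leq\tfrac12\log2=\tfrac12$. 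Dividing by $n$ and letting $n\to\infty$ gives the asserted inequality.

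The step I expect to be the crux is the own‑message bookkeeping: the genie's payload must be arranged so that $y_3^n$, and hence $m_{23}$, can be recovered \emph{before} $m_{23}$ is ever used, which is exactly what forces the scaled form $\hat{y}_2^n$ rather than $y_2^n$ --- at the small cost of replacing the term $C(h_3^2P+h_1^2P)$ that $y_2^n$ would have generated by $C(h_3^2P+h_3^2\tfrac{h_1^2}{h_2^2}P)$, using $h_3^2\geq h_2^2$; note also that handing user~2 the full signal $y_3^n$ outright (on top of its own $y_2^n$) would cost two degrees of freedom, whereas the surrogate keeps the side information at one degree of freedom. A routine point, treated exactly as in Lemma~\ref{Lemma:FirstBound}, is the discarding of the cross moment $\mathbb{E}[X_1(i)X_2(i)]$ when bounding $\mathbb{E}[\hat{Y}_2(i)^2]$ by $1+h_3^2P+h_3^2\tfrac{h_1^2}{h_2^2}P$.
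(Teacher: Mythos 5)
Your proof is correct and is essentially the paper's own argument in a different dress: the paper enhances user~3's noise to $\tfrac{h_2}{h_3}z_3$ and supplies the genie signals $m_{21}$ and $z_2^n-z_3^n$, whereas you supply a virtual receiver with the informationally equivalent pair $(\hat{y}_2^n,\tilde{z}_3^n)$ plus $(m_{21},m_{31},m_{32})$; in both cases the receiver decodes $m_{13},m_{23}$ from (a copy of) $y_3^n$ \emph{before} $m_{23}$ is needed to rebuild $y_2^n$ recursively, and the single-letterization mirrors Lemma~\ref{Lemma:FirstBound} and produces the same two terms $C\bigl(h_3^2P+h_3^2\tfrac{h_1^2}{h_2^2}P\bigr)$ and $\tfrac{1}{2}$. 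The cross-moment $\mathbb{E}[X_1(i)X_2(i)]$ you flag is likewise dropped without comment in the paper's own proofs, so your handling of it is consistent with theirs.
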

\begin{proof}
The derivation of the bound is similar to that in Lemma \ref{Lemma:FirstBound} with one difference, we start with enhancing user 3 by replacing the noise $z_3$ with $\frac{h_2}{h_3}z_3$ which is weaker than $z_3$ since $h_2^{2}\leq h_3^{2}$ (cf. \eqref{Ordering}). We denote the received signal of the enhanced receiver $y_3'$. Then, we give $m_{21}$ and $\tilde{z}_2^n=z_2^n-z_3^n$ to user 3 as side information. Now, after decoding $m_{23}$, user~3 has all the information available to user 2 at time instant $i=1$, i.e., the message pair $(m_{21},m_{23})$, and it can generate the first symbol of $x_2^{n}$, i.e., $x_2(1)$. By using $y_3'(1)=h_2x_1(1)+h_1x_2(1)+\frac{h_2}{h_3}z_3(1)$, user 3 can obtain $\tilde{y}_2(1)=h_2x_1(1)+\frac{h_2}{h_3}z_3(1)$. After multiplying $\tilde{y}_2(1)$ by $\frac{h_3}{h_2}$, and  adding $h_1x_3(1)$, user 3 obtains 
\begin{align}
\tilde{y}_2(1)=h_3x_1(1)+h_1x_3(1)+z_3(1).
\end{align}
Now by adding $\tilde{y}_2(1)$ to $\tilde{z}_2(1)$, user 3 gets $y_2(1)$ and can generate $x_2(2)$. By repeating this procedure, user 3 can generate all instances of $y_2^n$, and then using $m_{21}$ and $m_{23}$, it can decode $m_{12}$ just as user 2 can decode it. Therefore, we can write 
\begin{align*}
n(R_{12}+R_{23}+R_{13}-\varepsilon_n)\leq I(\vec{M}_{1},M_{23};Y_3'^n,\tilde{Z}_2^n, \hat{\vec{M}}_{1},M_{32})
\end{align*}
by using Fano's inequality, where $\varepsilon_n\to0$ as $n\to\infty$. Now by proceeding with similar steps as those in the proof of Lemma~\ref{Lemma:FirstBound} (i.e., $(c)$ to $(f)$), we can obtain
\begin{align}
R_{12}+R_{23}+R_{13} \leq C\left(h_3^2P+h_3^2\frac{h_1^2}{h_2^2}P\right)+\frac{1}{2},
\end{align}
which is the desired upper bound.
\end{proof}

Now we have the two components necessary for establishing our sum-capacity upper bound, given in the next theorem.
\begin{theorem}
\label{Thm:SCUB}
The sum-capacity of the 3-D2D channel satisfies
\begin{align}
\label{SCUB}
C_\Sigma\leq 2C(h_3^{2}P)+2.
\end{align}
\end{theorem}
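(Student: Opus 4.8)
The plan is to obtain \eqref{SCUB} by simply adding the two bounds of Lemmas~\ref{Lemma:FirstBound} and~\ref{Lemma:SecondBound} and then simplifying the resulting right-hand side with the help of the channel ordering \eqref{Ordering}. Observe first that the sum-rate decomposes exactly as $R_\Sigma = (R_{21}+R_{31}+R_{32}) + (R_{12}+R_{23}+R_{13})$, where the first triple is the left-hand side of Lemma~\ref{Lemma:FirstBound} and the second is the left-hand side of Lemma~\ref{Lemma:SecondBound}. Hence, for any achievable rate tuple,
\begin{align*}
R_\Sigma \leq C(h_3^2P+h_2^2P) + C\!\left(\frac{h_1^2}{h_2^2}\right) + C\!\left(h_3^2P + h_3^2\frac{h_1^2}{h_2^2}P\right) + \frac12 ,
\end{align*}
and taking the supremum over achievable $R_\Sigma$ gives the same bound on $C_\Sigma$.

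Next I would bound each of the three $C(\cdot)$ terms separately. For the first term, $h_2^2 \leq h_3^2$ yields $C(h_3^2P+h_2^2P) \leq \frac12\log(1+2h_3^2P)$, and the elementary inequality $\frac12\log(1+2x) \leq \frac12 + \frac12\log(1+x)$ (valid for $x\geq 0$, since it reduces to $1\leq 2$) gives $C(h_3^2P+h_2^2P)\leq \frac12 + C(h_3^2P)$. For the middle term, $h_1^2 \leq h_2^2$ gives $C(h_1^2/h_2^2) \leq \frac12\log 2 = \frac12$. For the third term, using $h_1^2/h_2^2 \leq 1$ we get $h_3^2P + h_3^2\frac{h_1^2}{h_2^2}P \leq 2h_3^2P$, so again $C\!\left(h_3^2P + h_3^2\frac{h_1^2}{h_2^2}P\right) \leq \frac12\log(1+2h_3^2P) \leq \frac12 + C(h_3^2P)$. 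Substituting these three estimates and adding the remaining $\frac12$ from Lemma~\ref{Lemma:SecondBound}, the four additive constants sum to exactly $2$, which yields $C_\Sigma \leq 2C(h_3^2P)+2$.

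There is no real obstacle in this step; it is a short chain of elementary estimates once the two genie-aided lemmas are available. The only point that needs a little care is the bookkeeping of the constants: one must apply $\frac12\log(1+2x)\leq \frac12 + \frac12\log(1+x)$ to \emph{both} the first and third terms, and for the third term it is essential to use $h_1^2/h_2^2\leq 1$ (rather than a looser bound) so that its argument is at most $2h_3^2P$ — this is precisely what makes the constants close at $2$ rather than something larger. All the substantive work was already done in establishing Lemmas~\ref{Lemma:FirstBound} and~\ref{Lemma:SecondBound}.
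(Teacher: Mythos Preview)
Your proposal is correct and essentially matches the paper's proof: the paper likewise adds Lemmas~\ref{Lemma:FirstBound} and~\ref{Lemma:SecondBound}, uses $h_1^2\leq h_2^2\leq h_3^2$ to bound each argument by $2h_3^2P$ (and $h_1^2/h_2^2\leq 1$), and then applies the same elementary inequality $C(2x)\leq C(x)+\tfrac12$ to arrive at $R_\Sigma\leq 2C(2h_3^2P)+1\leq 2C(h_3^2P)+2$. The only cosmetic difference is that the paper simplifies each lemma separately before adding, whereas you add first and then simplify term by term.
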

\begin{proof}
To prove this theorem, we use the upper bound on the sum $R_{21}+R_{31}+R_{32}$ given in Lemma \ref{Lemma:FirstBound} which satisfies
\begin{align}
R_{21}+R_{31}+R_{32} &\leq C(h_3^2P+h_2^2P)+C\left(\frac{h_1^{2}}{h_2^{2}}\right)\\
&\quad\leq C(2h_3^2P)+C\left(1\right)\\
&\quad= C(2h_3^2P)+\frac{1}{2},
\end{align}
since $h_3^{2}\geq h_2^{2}\geq h_1^{2}$ \eqref{Ordering}, and the upper bound on the sum $R_{12}+R_{23}+R_{13}$ given in Lemma \ref{Lemma:SecondBound} which satisfies
\begin{align}
R_{12}+R_{23}+R_{13} &\leq C\left(h_3^2P+h_3^2\frac{h_1^2}{h_2^2}P\right)+\frac{1}{2}\\
&\leq C\left(2h_3^2P\right)+\frac{1}{2}.
\end{align}
By adding the two bounds, we get
\begin{align}
R_\Sigma &\leq 2C(2h_3^2P)+1<2C(h_3^2P)+2,
\end{align}
which proves that any achievable rate tuple must have a sum that satisfies $R_\Sigma \leq 2C(h_3^2P)+2$. Therefore, we obtain the desired sum-capacity upper bound.
\end{proof}
Clearly, this sum-capacity upper bound is tighter than that obtained from the cut-set bounds as $P$ increases. Namely, this bound behaves as $\log(P)$ as $P$ grows (2 DoF), in contrast to the cut-set bounds that behave as $\frac{3}{2}\log(P)$ (3 DoF). Next, we show that this upper bound is achievable within a constant gap.

\section{Transmission Strategies}
\label{Achievability}
The derived sum-capacity upper bound in Theorem \ref{Thm:SCUB} has the following desirable structure: it is equal to twice the capacity of the strongest channel ($h_3$) plus a constant. This directly indicates a near sum-rate optimal scheme for the 3-D2D channel. Namely, by allowing the two users sharing this strongest channel to communicate, we can achieve the upper bound within a constant gap. 

Hence, let users 1 and 2 communicate via the channel $h_3$ while leaving user 3 silent. This reduces the 3-D2D channel to a two-way channel. As shown in \cite{Han}, the following rates are achievable in the resulting two-way channel
\begin{align}
R_{12}& \leq C(h_3^{2}P)\\
R_{21}& \leq C(h_3^{2}P).
\end{align}
By adding the two achievable rates, we conclude that the following sum-rate is achievable
\begin{align}
R_\Sigma& \leq 2C(h_3^{2}P).
\end{align}
Comparing this achievable sum-rate and the upper bound \eqref{SCUB} in Theorem \ref{Thm:SCUB}, we can see that this achievable sum-rate is within a gap of 2 bits of the sum-capacity upper bound. This leads to a sum-capacity characterization within a gap of 2 bits as follows
\begin{align}
2C(h_3^{2}P)\leq C_\Sigma \leq 2C(h_3^{2}P)+2.
\end{align}
This proves the main result of the paper given in Theorem \ref{Thm:Main}.

Although this transmission strategy suffices to show the achievability of the sum-capacity upper bound of the 3-D2D channel within a constant gap, we would like to additionally highlight the following interesting possibility. Consider a scenario where users 2 and 3 want to communicate with a rate that can not be supported by the channel $h_1$. Interestingly, if this rate can be supported by the channel $h_2$, then the two users can successfully communicate via user 1 as follows. Users 2 and 3 use nested-lattice codes \cite{NazerGastpar} to establish physical-layer network-coding \cite{WilsonNarayananPfisterSprintson} for bi-directional relaying via user 1. In other words, users 2 and 3 communicate via user 1 as in a two-way relay channel\footnote{Users 2 and 3 can also communicate via user 1 using quantize-forward as in \cite{AvestimehrSezginTse}}. This leads to the achievability of the rates
\begin{align}
R_{23},R_{32} \leq C\left(h_2^{2}P-\frac{1}{2}\right),
\end{align}
which is larger than the rates that can be achieved via the channel $h_1$ given by
\begin{align}
R_{23},R_{32} \leq C\left(h_1^{2}P\right),
\end{align}
as long as $h_2^{2}\geq h_1^{2}+\frac{1}{2P}$. This condition is guaranteed by \eqref{Ordering} at high $P$. While this strategy does not achieve the sum-capacity within a constant gap, it is useful for achieving high communication rates between users 2 and 3.

\section{Conclusion}
\label{Conclusion}
We studied the 3-user D2D channel (a three-way channel) and obtained its sum-capacity within a constant gap. While this required deriving a novel genie-aided upper bound, the achievability strategy is in fact simple; the sum-capacity is achievable within a constant gap by letting only two users communicate via the strongest channel. This insight is interesting since it shows that increasing the number of users in a multi-way communications channel \alert{from 2 to 3} does not increase the sum-capacity scaling behaviour of the channel. \alert{The authors believe that this conclusions extends to larger multi-way communications channels with more than 3 users.} Note that this is analogous to the multi-way relay channel where increasing the number of users also does not increase the capacity scaling behaviour. As an extension to this work, it would be interesting to find out if other analogies exist between multi-way channel and multi-way relay channels.

\bibliography{myBib}

\end{document}